\documentclass[english]{article}
\usepackage{authblk}
\usepackage[T1]{fontenc}
\usepackage[latin9]{inputenc}
\usepackage{geometry}
\geometry{verbose,lmargin=1.25in,rmargin=1.25in}
\usepackage{amsmath}
\usepackage{amsthm}
\usepackage{amssymb}
\usepackage{color}
\usepackage{graphicx}
\usepackage[sorting=none]{biblatex}
\addbibresource{references.bib}

\makeatletter
\numberwithin{equation}{section}
\numberwithin{figure}{section}
\theoremstyle{plain}
\newtheorem{thm}{\protect\theoremname}
\theoremstyle{plain}
\newtheorem{prop}[thm]{\protect\propositionname}
\theoremstyle{remark}
\newtheorem{rem}[thm]{\protect\remarkname}

\setlength{\parskip}{\medskipamount}

\makeatother

\usepackage{babel}
\providecommand{\propositionname}{Proposition}
\providecommand{\remarkname}{Remark}
\providecommand{\theoremname}{Theorem}

\begin{document}
\global\long\def\ve{\varepsilon}%
\global\long\def\R{\mathbb{R}}%
\global\long\def\Rn{\mathbb{R}^{n}}%
\global\long\def\Rd{\mathbb{R}^{d}}%
\global\long\def\E{\mathbb{E}}%
\global\long\def\P{\mathbb{P}}%
\global\long\def\bx{\mathbf{x}}%
\global\long\def\vp{\varphi}%
\global\long\def\ra{\rightarrow}%
\global\long\def\smooth{C^{\infty}}%
\global\long\def\Tr{\mathrm{Tr}}%
\global\long\def\bra#1{\left\langle #1\right|}%
\global\long\def\ket#1{\left|#1\right\rangle }%
\global\long\def\Re{\mathrm{Re}}%
\global\long\def\Im{\mathrm{Im}}%
\global\long\def\bsig{\boldsymbol{\sigma}}%
\global\long\def\btau{\boldsymbol{\tau}}%
\global\long\def\bmu{\boldsymbol{\mu}}%
\global\long\def\bx{\boldsymbol{x}}%
\global\long\def\bups{\boldsymbol{\upsilon}}%
\global\long\def\bSig{\boldsymbol{\Sigma}}%
\global\long\def\bt{\boldsymbol{t}}%
\global\long\def\bs{\boldsymbol{s}}%
\global\long\def\by{\boldsymbol{y}}%
\global\long\def\brho{\boldsymbol{\rho}}%
\global\long\def\ba{\boldsymbol{a}}%
\global\long\def\bb{\boldsymbol{b}}%
\global\long\def\bz{\boldsymbol{z}}%
\global\long\def\bc{\boldsymbol{c}}%
\global\long\def\balpha{\boldsymbol{\alpha}}%
\global\long\def\bbeta{\boldsymbol{\beta}}%

\newcommand{\bfemph}[1]{\textbf{\textit{#1}}}
\newcommand{\ML}[1]{\textcolor{blue}{[ML:#1]}}
\newcommand{\JC}[1]{\textcolor{red}{[JC: #1]}}

\title{Direct interpolative construction of the discrete Fourier transform as a matrix product operator}
\author[1]{Jielun Chen}
\author[2]{Michael Lindsey}
\affil[1]{Department of Physics, California Institute of Technology, Pasadena, CA 91125 USA}
\affil[2]{Department of Mathematics, University of California, Berkeley, CA 94720 USA}
\renewcommand\Authand{, }
\renewcommand\Affilfont{\itshape\small}
\maketitle

\begin{abstract}
    The quantum Fourier transform (QFT), which can be viewed as a reindexing of the discrete Fourier transform (DFT), has been shown to be compressible as a low-rank matrix product operator (MPO) or quantized tensor train (QTT) operator \cite{Chen_2023_QFT}. However, the original proof of this fact does not furnish a construction of the MPO with a guaranteed error bound. Meanwhile, the existing practical construction of this MPO, based on the compression of a quantum circuit, is not as efficient as possible. We present a simple closed-form construction of the QFT MPO using the interpolative decomposition, with guaranteed near-optimal compression error for a given rank. This construction can speed up the application of the  QFT and the DFT, respectively, in quantum circuit simulations and QTT applications. We also connect our interpolative construction to the approximate quantum Fourier transform (AQFT) by demonstrating that the AQFT can be viewed as an MPO constructed using a different interpolation scheme.
\end{abstract}

\section{Introduction}
The quantum Fourier transform (QFT)~\cite{Shor_1994} is one of the most important algorithmic primitives in quantum computing. It is often viewed as providing exponential speedup compared to the fast Fourier transform (FFT), due to its gate complexity of $O(n^2)$ acting on an $n$-qubit quantum state. Such a state can be viewed as an $N$-dimensional vector with $N=2^n$, and under this identification the QFT corresponds to a discrete Fourier transform (DFT), which costs $O(N\log N) = O(2^n n)$ operations via the FFT.

However, if one makes use of low-rank structure of the input state, the QFT can often be simulated classically using efficient operations with matrix product states (MPS) and matrix product operators (MPO)~\cite{Fannes_1992, Klumper_1992, Ostlund_1995, Vidal_2003, Pirvu_2010}. Equivalently, the DFT of functions compressible as quantized tensor trains (QTT)~\cite{oseledets2011tensortrain, Khoromskij_2009_QTT} can be computed with exponential speedup. For example, one can remove the long-range quantum gates in the QFT circuit that are close to identity, resulting in the approximate QFT (AQFT)~\cite{Coppersmith_2002}. The AQFT automatically admits an MPO representation due to the bounded interaction length of the circuit~\cite{Aharonov_2007, Browne_2007, Yoran_2007}, and therefore it can be simulated classically on any input states represented as an MPS. The full QFT circuit can also be viewed as the composition of $n$ MPOs of rank 2, which can be applied sequentially to MPS inputs with intermediate compressions~\cite{Shinaoka_2023, Dolgov_2012}. This approach has been called the ``superfast Fourier transform'' or QTT-FFT \cite{Dolgov_2012}. Recently, the entire QFT operator was proven to be compressible as an MPO~\cite{Chen_2023_QFT}. In practice, the MPO representation of the QFT, referred to as the QFT MPO, can be obtained by contracting and compressing the entire QFT circuit. Interestingly, for a given error tolerance, the QFT MPO was observed to have smaller ranks compared to the AQFT MPO~\cite{Chen_2023_QFT, Woolfe_2014}, implying that the long-range gates actually play a role in reducing the entanglement. 

Classical simulation of the QFT can be used as a classical algorithm that outperforms the FFT in restricted settings. Concretely, one converts an $N$-dimensional vector into an $n$-site MPS and contracts it with the QFT circuit or the QFT MPO. In this approach, even if the state admits compression as an MPS, the conversion dominates the time complexity if the state is formed as a dense vector and compressed via the TT singular value decomposition (TT-SVD) \cite{oseledets2011tensortrain}. A randomized variant of TT-SVD yields an $O(\log N)$ speed-up \cite{Chen_2023_QFT}, and other heuristic approaches such as TT-cross \cite{Dolgov_2020_cross, Savostyanov_2014_cross} can yield $O(N)$ speedup, enabling the QTT-FFT of~\cite{Dolgov_2012}. If the input vector is the quantized representation of a function that is well approximated in a multiresolution polynomial basis, a direct analytical construction based on the interpolative decomposition can be applied with quantitative control over the error~\cite{lindsey2023multiscale}. The interpolating cores introduced in~\cite{lindsey2023multiscale} are an essential ingredient in this work. In some situations, the input is automatically available in low-rank tensor format, such as the evolving state of a differential equation solved within the QTT format~\cite{khoromskij2014tensor, Gourianov_2022, Ye_2022_vlasov, kiffner2023tensor} or a quantum simulation solved using an MPS representation 
 of the state~\cite{Daley_2004, White_2004, Vidal_2007}.

In these applications, an efficient and accurate construction of the QFT MPO is crucial. While it can be constructed efficiently by contracting the QFT circuit, this approach has a few drawbacks. First, to avoid exponential scaling in $n$, one must perform intermediate compressions before the final MPO is formed, and no rigorous \emph{a priori} estimate of the accumulated error is available, even though the approach is highly accurate in practice and heuristic arguments support its accuracy. Second, the circuit contraction takes $O(r^3 n^2)$ time, where $r$ is the bond dimension, while the MPO ultimately contains at most $O(r^2 n)$ information, so the computational complexity does not appear to be optimal, as we confirm in this work. Third, the parameters in the circuit decay like $O(1/N)$, so for large $N=2^n$ we unavoidably lose accuracy in the long-range gates due to the limitations of numerical precision. This loss of accuracy resembles the approximation imposed by the AQFT circuit, which increases the bond-dimension by the aforementioned observation in \cite{Chen_2023_QFT, Woolfe_2014}.
 
In this paper, we present a simple closed-form construction of the QFT MPO, avoiding circuit contraction and eliminating all three drawbacks mentioned above. The construction involves an interpolative decomposition based on polynomial interpolation on the Chebyshev-Lobatto grid, applied recursively on each qubit, and yields a nearly optimal MPO compression for a given bond dimension. The bond dimension of the construction corresponds to the number of interpolation points. For a fixed number of qubits, the error decreases super-exponentially with respect to the bond dimension, matching the bounds for the Schmidt coefficients derived in~\cite{Chen_2023_QFT}. Furthermore, the tensor cores are all identical (excepting the first and last). Finally, we show that the AQFT circuit can also be interpreted as an MPO constructed in the same fashion but with a different interpolation scheme, i.e., piecewise constant interpolation. This interpolation scheme converges more slowly with respect to the number of interpolation points, explaining the observation that the AQFT has a higher MPO rank than the QFT MPO, for a given error tolerance.

We outline the paper as follows. In Section~\ref{sec:prelim} we present some preliminary definitions and notation. In Section~\ref{section:rank_bound} we demonstrate how the interpolative decomposition yields a rank bound for the unfolding matrices of the QFT, similar to that of~\cite{Chen_2023_QFT}. (In fact, our rank bound is not implied by~\cite{Chen_2023_QFT} since it is defined in terms of the infinity norm error, rather than the Frobenius norm error.) We also describe the connection of the rank bound for the unfolding matrices with the \emph{complementary low-rank} structure~\cite{Li_2015_butterfly} that has been observed for the DFT and other operators.
This section can also be viewed as a warm-up for Section~\ref{sec:construction}, in which we present our construction of the QFT as an MPO using the interpolative decomposition. In Section~\ref{sec:error}, we bound the error of this construction. In Section~\ref{sec:aqft}, we illustrate the connection to the AQFT.

\subsection{Acknowledgments}
The authors are grateful to Sandeep Sharma, Garnet Chan, and Yuehaw Khoo for helpful discussions.

\section{Preliminaries}
\label{sec:prelim}

Consider discrete indices $s,t\in\{0,1,\ldots,N-1\}$, where $N=2^{n}$,
and define the matrix of the \bfemph{discrete Fourier transform (DFT)}
\[
\mathbf{F}_{s,t}=e^{-\frac{2\pi ist}{N}}.
\]
We comment that in the quantum computing literature, the convention for the QFT is $\exp(2\pi i s t/N)$ rather than $\exp(-2\pi i s t/N)$, but we use the latter to match the standard definition of the DFT.

We can place $s$ and $t$, respectively, in bijection with bit strings
$\sigma_{1:n}=(\sigma_{1},\ldots,\sigma_{n})$ and $\tau_{1:n}=(\tau_{1},\ldots,\tau_{n})$
in $\{0,1\}^{n}$ via 
\begin{equation}
\label{eq:bindec}
s=\sum_{k=1}^{n}2^{n-k}\sigma_{k},\quad t=\sum_{k=1}^{n}2^{k-1}\tau_{k}.
\end{equation}
These bits can be viewed as binary digits for $s$ and $t$ with opposite orderings.

Then we can view $\mathbf{F}$ as a tensor $F = F(\sigma_{1:n}, \tau_{1:n})$ of size $2^{2n}$ via this identification: 
\begin{equation}
\label{eq:DFT}
\mathbf{F}_{s,t}=F(\sigma_{1:n},\tau_{1:n})=e^{-\pi i\sum_{k,l=1}^{n}2^{-k}2^{l}\sigma_{k}\tau_{l}}.
\end{equation}

In the QFT, $\tau_k$ and $\sigma_k$ correspond to the input and output indices of the $k$-th qubit respectively. Notice that $\tau_1$ is the least significant bit while $\sigma_1$ is the most significant bit, and it is a key point in~\cite{Chen_2023_QFT} that only this ordering induces low-rank structure of the QFT as a \bfemph{matrix product operator} or \bfemph{quantized tensor train} operator, which is defined as the approximate factorization
\begin{equation}
    F(\sigma_{1:n}, \tau_{1:n}) \approx \sum_{\alpha_1\in [r_1],\dots,\alpha_{n-1}\in [r_{n-1}]} A_1^{\alpha_1}(\sigma_{1},\tau_{1}) A_2^{\alpha_1, \alpha_2}(\sigma_{2},\tau_{2}) \dots A_n^{\alpha_{n-1}}(\sigma_{n},\tau_{n}).
\end{equation}
where we used the notation $[r_k] = \{0,\dots,r_k-1\}$. The integers $r_k$ are often called the \bfemph{bond dimensions} or \bfemph{TT ranks}, and for simplicity we assume that $r_k = r$ for some fixed $r$. The tensors $A_{k}^{\alpha_{k-1}, \alpha_{k}}(\sigma_k, \tau_k)$ are referred to as \bfemph{tensor cores}.

We also define the $m$-th \bfemph{unfolding matrix} of $F$ as the $2^{2m} \times 2^{2(n-m)}$ matrix
\begin{equation}
    T_m( {\sigma_{1:m},\tau_{1:m}} ;  {\sigma_{m+1:n}, \tau_{m+1:n}}) = F(\sigma_{1:n}, \tau_{1:n}),
\end{equation}
where we view the multi-indices before and after the semicolon, respectively, as row and column indices of a matrix. It is shown in \cite{oseledets2011tensortrain} that a QTT with good approximation exists if the $m$-th unfolding matrix is numerically low-rank for all $m$.

To measure the error of our decompositions it is useful to define a tensor \bfemph{infinity norm}
\begin{equation}
    \|T\|_\infty = \Vert \mathrm{vec}(T) \Vert_\infty
\end{equation}
as the largest magnitude of any element of the tensor. We will use this notation for tensors of different shapes, but ultimately the accuracy of our MPO approximation itself is most naturally expressed in this norm.


We also define notations for our interpolation scheme. We will denote the  \bfemph{Chebyshev-Lobatto grid points} (shifted and scaled to the interval $[0, 1]$) by 
\begin{equation}
    c^{\alpha} = \frac{1 - \cos(\pi \alpha/K) }{2}
\end{equation}
for $\alpha = 0, ..., K$. Throughout, we use $K$ to denote grid size (or more properly, the grid size minus one). The corresponding Chebyshev-Lobatto interpolation of a function $f(x)$, i.e., the Lagrange polynomial interpolation on the Chebyshev-Lobatto grid, is 
\begin{equation}
    f(x) \approx \sum_{\alpha = 0}^K f(c^\alpha) P^{\alpha}(x),
\end{equation}
where the $P^\alpha$ are degree-$K$ polynomials such that $P^\alpha(c^\beta) = \delta_{\alpha, \beta}$, i.e., the Lagrange interpolating polynomials for this grid.
In the case of Chebyshev-Lobatto interpolation, the $P^\alpha$ are sometimes called the \bfemph{Chebyshev cardinal functions}, which can be evaluated directly with simple trigonometric formulas \cite{Boyd_1992}, bypassing the application of the more general formula for Lagrange basis functions.

\section{Rank bound}
\label{section:rank_bound}
Although the rank of $F(\sigma_{1:n},\tau_{1:n})$ as an MPO was already
bounded in \cite{Chen_2023_QFT}, we provide an alternative rank bound from an
interpolative perspective. In fact, this point of view yields \emph{entrywise}
control over the error of $F$, which is not captured by the bounds on the Schmidt spectrum in \cite{Chen_2023_QFT}.

Our motivation arises from computing 
\begin{align*}
F(\sigma_{1:n},\tau_{1:n}) & =e^{-\pi i\sum_{k,l=1}^{n}2^{-k}2^{l}\sigma_{k}\tau_{l}}\\
 & =e^{-\pi i\sum_{k,l=1}^{m}2^{-k}2^{l}\sigma_{k}\tau_{l}}\times e^{-\pi i\sum_{k=1}^{m}\sum_{l=m+1}^{n}2^{-k}2^{l}\sigma_{k}\tau_{l}}\\
 & \quad\quad\quad\times\ e^{-\pi i\sum_{k=m+1}^{n}\sum_{l=1}^{m}2^{-k}2^{l}\sigma_{k}\tau_{l}}\times e^{-\pi i\sum_{k,l=m+1}^{n}2^{-k}2^{l}\sigma_{k}\tau_{l}}.
\end{align*}
 Now if $k \leq m$ and $l > m$, then $2^{-k}2^{l}\sigma_{k}\tau_{l}\in2\mathbb{Z}$,
and therefore 
\[
e^{-\pi i\sum_{k=1}^{m}\sum_{l=m+1}^{n}2^{-k}2^{l}\sigma_{k}\tau_{l}}=1.
\]
 It follows that we can write 
\[
F(\sigma_{1:n},\tau_{1:n})=F(\sigma_{1:m},\tau_{1:m})F(\sigma_{m+1:n},\tau_{m+1:n})
e^{-2\pi i\left(\sum_{k=m+1}^{n}2^{m-k}\sigma_{k}\right)\left(\sum_{l=1}^{m}2^{l-m-1}\tau_{l}\right)},
\]
where we have overloaded the notation for $F$ slightly to denote the quantized tensor representations of DFT matrices of arbitrary sizes.

It is then useful to define 
\begin{equation}
x_{>m}=\sum_{k=m+1}^{n}2^{m-k}\sigma_{k}\in[0,1],\quad y_{\leq m}=\sum_{l=1}^{m}2^{l-m-1}\tau_{l}\in[0,1],\label{eq:xgtr_yless}
\end{equation}
 so 
\begin{equation}
F(\sigma_{1:n},\tau_{1:n})=F(\sigma_{1:m},\tau_{1:m})F(\sigma_{m+1:n},\tau_{m+1:n})e^{-2 \pi ix_{>m}y_{\leq m}}. \label{eq:QFT_decomp}
\end{equation}

Then we are motivated to consider, for arbitrary $y\in[0,1]$, the
function 
\[
f_{y}(x)=e^{-2\pi ixy},\quad x\in[0,1].
\]
 We can approximate $f_{x}$ with Chebyshev-Lobatto interpolation
to obtain 
\[
f_{y}(x)\approx\sum_{\alpha=0}^{K}e^{-2\pi iyc^{\alpha}}P^{\alpha}(x).
\]
 If this equality holds with pointwise error bounded by $\ve$, then the rank-$(K+1)$ approximation of the $m$-th unfolding
matrix of $F$ has entrywise error bounded by $\ve$.

The above arguments are reviewed diagrammatically in Fig.~\ref{fig:QFT_decomp}. Next we state and prove an error bound.

\begin{prop}
There exists a rank-$(K+1)$ approximation of the $m$-th unfolding
matrix of $F$ whose entrywise error is bounded uniformly by 
\[
\frac{4\left(\frac{\pi}{2}\right)^{K+1}e^{K}K^{-K}}{K-\frac{\pi}{2}}.
\]
\label{prop:unfolding_mat_error}
\end{prop}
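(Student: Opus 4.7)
The plan is to substitute the Chebyshev-Lobatto interpolant of the scalar coupling phase into the identity \eqref{eq:QFT_decomp}, so that bounding the entrywise error of the $m$-th unfolding matrix reduces to bounding a classical Chebyshev interpolation error on $[0,1]$, which one then controls via a Bernstein ellipse estimate with a near-optimal choice of radius $\rho$.

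First, note that in \eqref{eq:QFT_decomp} the ``small DFT'' factors $F(\sigma_{1:m},\tau_{1:m})$ and $F(\sigma_{m+1:n},\tau_{m+1:n})$ each depend on only one side of the bipartition $(\sigma_{1:m},\tau_{1:m}) \mid (\sigma_{m+1:n},\tau_{m+1:n})$ defining $T_m$ and have unit modulus, while $x_{>m} \in [0,1]$ depends only on the column index and $y_{\leq m} \in [0,1]$ only on the row index. Substituting $e^{-2\pi i xy} \approx \sum_{\alpha=0}^{K} e^{-2\pi i y c^{\alpha}} P^{\alpha}(x)$ into \eqref{eq:QFT_decomp} therefore yields an explicit rank-$(K+1)$ outer-product decomposition of $T_m$, whose entrywise error equals $\sup_{x,y\in[0,1]}\bigl|e^{-2\pi i xy} - \sum_{\alpha=0}^{K} e^{-2\pi i y c^{\alpha}} P^{\alpha}(x)\bigr|$ since the two DFT factors are unimodular.

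For each fixed $y \in [0,1]$, the function $f_y(z) = e^{-2\pi i z y}$ is entire, and on the Bernstein ellipse of parameter $\rho > 1$ scaled to the interval $[0,1]$ (foci at $0$ and $1$) the imaginary part is bounded by $(\rho - \rho^{-1})/4$, so $|f_y(z)| \le e^{2\pi y (\rho - \rho^{-1})/4} \le e^{\pi\rho/2}$ uniformly in $y$. The classical Bernstein ellipse bound for degree-$K$ Chebyshev-Lobatto interpolation then gives
$$
\|f_y - p_y\|_{\infty,[0,1]} \;\le\; \frac{4\,e^{\pi\rho/2}\,\rho^{-K}}{\rho - 1}
$$
for every $\rho > 1$. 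Balancing $e^{\pi\rho/2}$ against $\rho^{-K}$ by setting $\frac{d}{d\rho}[\pi\rho/2 - K\log\rho] = 0$ yields the near-optimal choice $\rho = 2K/\pi$, at which $e^{\pi\rho/2} = e^{K}$, $\rho^{-K} = (\pi/(2K))^{K}$, and $\rho - 1 = (2K-\pi)/\pi$; a direct simplification collapses the bound to the stated constant $4(\pi/2)^{K+1} e^{K} K^{-K} / (K - \pi/2)$.

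The only conceptually delicate step is the initial reduction; thereafter the proof is standard Chebyshev approximation theory. The two bookkeeping hazards are (i) correctly tracking the factor-of-two rescaling of the Bernstein ellipse when passing from $[-1,1]$ to $[0,1]$, so that the exponential envelope of $f_y$ is $e^{\pi\rho/2}$ rather than $e^{\pi\rho}$, and (ii) verifying that the ansatz $\rho = 2K/\pi$ (slightly sub-optimal once the $(\rho-1)^{-1}$ factor is accounted for) yields exactly the claimed constant.
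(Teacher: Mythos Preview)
Your proposal is correct and follows essentially the same approach as the paper: both reduce the entrywise error of the rank-$(K+1)$ unfolding approximation to the Chebyshev--Lobatto interpolation error of $f_y(x)=e^{-2\pi i x y}$, bound this via the Bernstein ellipse estimate (Theorem~8.2 of \cite{TrefethenBook2019}), and optimize by taking $\rho = 2K/\pi$. The only cosmetic difference is that the paper first rescales explicitly to $[-1,1]$ via $g_y(z)=f_y((z+1)/2)$ before invoking the standard Bernstein ellipse, whereas you track the factor-of-two rescaling directly on $[0,1]$; the resulting bound and computation are identical.
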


\begin{rem}
This bound on the infinity norm error implies a bound on the Frobenius norm error very similar to the Frobenius error implied by the Schmidt coefficient bound in~\cite{Chen_2023_QFT}, but with a worse exponential prefactor $c^K$. However, the $K^{-K}$ decay present in both bounds quickly takes over. Thus Proposition~\ref{prop:unfolding_mat_error} has similar implications as the main theorem in the context of~\cite{Chen_2023_QFT}. Meanwhile, if entrywise control is needed, then Proposition~\ref{prop:unfolding_mat_error} has additional value.
\end{rem}

\begin{figure}
    \centering
    \includegraphics[width=0.9\textwidth]{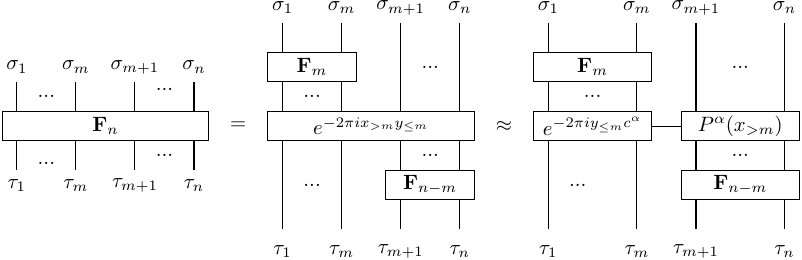}
    \caption{Diagrammatic illustration of the arguments of Section~\ref{section:rank_bound}. The first equality in the figure depicts~\eqref{eq:QFT_decomp}, and the second approximate equality depicts the low-rank decomposition of the $m$-th unfolding matrix, which derives from the Chebyshev-Lobatto interpolation with entrywise error controlled by Proposition~\ref{prop:unfolding_mat_error}.}
    \label{fig:QFT_decomp}
\end{figure}

\begin{proof}[Proof of Proposition~\ref{prop:unfolding_mat_error}]
To apply standard results on Chebyshev interpolation error, we rescale
$f_{y}$ to the reference interval $[-1,1]$, defining 
\[
f_{y}(z)=f_{y}\left(\frac{z+1}{2}\right)=e^{-\pi iy}e^{-\pi iyz}.
\]
 Note that $g_{y}$ extends analytically to all of $z$. For $\rho > 1$, define the Bernstein ellipse
$\mathcal{E}_{\rho}$ by
\[
    \mathcal{E}_{\rho} := \left\{z \in \mathbb{C} : \left[\frac{\Re(z)}{a_\rho}\right]^2 + \left[\frac{\Im(z)}{b_\rho}\right]^2 \leq 1 \right\}
\]
where
\[
    a_\rho := \frac{\rho + \rho^{-1}}{2}, \quad b_\rho := \frac{\rho - \rho^{-1}}{2}.
\]
Moreover,
$\vert g_{y}(z)\vert=e^{\pi y\,\Im(z)}$, so on the Bernstein ellipse $\mathcal{E}_{\rho}$ we have that 
\[
\vert g_{y}(z)\vert\leq e^{\pi y(\rho-\rho^{-1})/2}\leq e^{\pi y\rho/2}.
\]
 Then by Theorem 8.2 of~\cite{TrefethenBook2019}, it follows that the pointwise
error of Chebyshev interpolation of $g_{y}$ is bounded (independently
of $y\in[0,1]$) by 
\[
\frac{4e^{\pi\rho/2}\rho^{-K}}{\rho-1}
\]
 for any $\rho>1$. We want to choose $\rho$ to optimize this bound.
It is roughly equivalent to find the optimizer of the numerator, which
is simple to compute exactly as $\rho=\frac{2}{\pi}K$. Then plugging
into our expression yields the bound 
\[
\frac{4\left(\frac{\pi}{2}\right)^{K+1}e^{K}K^{-K}}{K-\frac{\pi}{2}},
\]
which completes the proof.
\end{proof}

\subsection{Connection to complementary low-rankness}
The low-rankness of the unfolding matrices in fact implies the well-known \emph{complementary low-rank} property \cite{Li_2015_butterfly} of the DFT matrix. This property says that, if we partition the matrix into $2^l \times 2^{n-l}$ blocks of size $2^{n-l} \times 2^l$ where $l$ is the level of the partition, then for all levels $l=0,\ldots, n$ each block is \emph{numerically low-rank}. For example, when $n = 3$, the partition can be visualized as follows.
\begin{center}
    \includegraphics[width=0.75\textwidth]{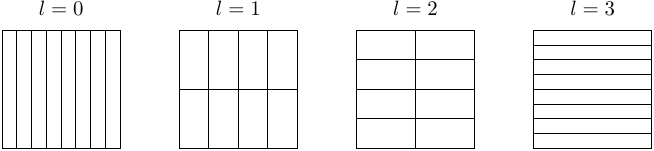}
\end{center}

We argue that the low-rankness of the unfolding matrix is a stronger property than complementary low-rankness. Suppose that we fix the level $l$ and let $F_{i, j}$ denote the $(i,j)$-th block of the DFT matrix at this level. The complementary low-rank property implies that a decomposition exists for all $(i, j)$, i.e., $F_{i,j} \approx \tilde{R}_{i,j} \tilde{L}_{i,j} $ for some tensors $\tilde{R}$ and $\tilde{L}$. On the other hand, the low-rankness of the unfolding matrix implies the more special structure $F_{i,j} \approx R_j L_i$.

We illustrate this point diagrammatically as follows.
\begin{center}
    \includegraphics[width=0.55\textwidth]{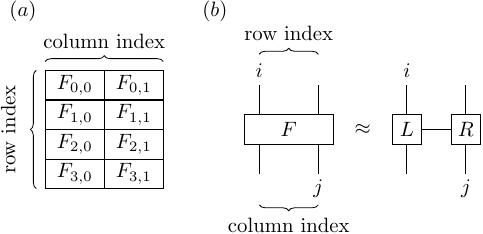}
\end{center}
Here $(a)$ depicts the partition and the block indices for $n = 3$ and $l = 2$, and $(b)$ describes the low-rank approximation of the unfolding matrix, which is equivalent to $F_{i,j} \approx R_{j} L_{i}$. It is indeed known~\cite{Edelman1997TheFF} that the submatrices for the DFT can be written as $F_{i,j} = D_j F_{0, 0} D_i$ where $D_i$ (resp., $D_j$) is the unitary diagonal matrix generated by the $i$-th (resp., $j$-th) column of the $2^{l} \times 2^{l}$ (resp., $2^{n-l} \times 2^{n-l}$) DFT. This point is essentially a restatement of~\eqref{eq:QFT_decomp}.

\section{Direct interpolative construction}
\label{sec:construction}

The preceding argument does not directly furnish an approximate presentation
of $F$ as an MPO. Now we turn to the task of such a construction, which is illustrated diagrammatically in Fig.~\ref{fig:QFT_recur}. 

\begin{figure}
    \centering
    \includegraphics[width=0.75\textwidth]{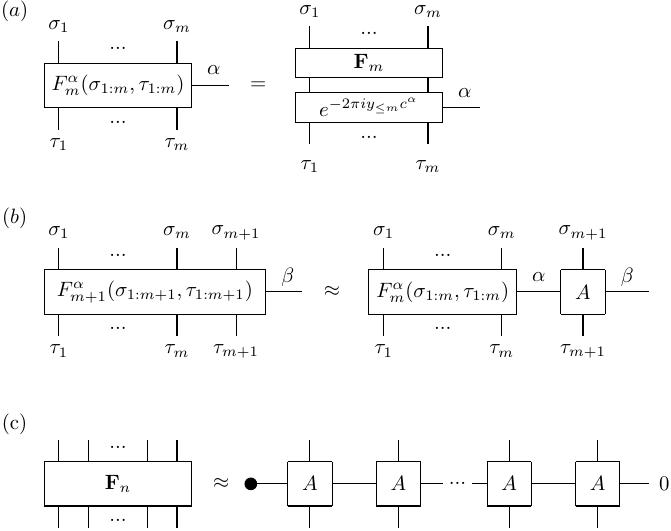}
    \caption{Illustration of the QFT MPO construction. (a) The $m$-th tensor $F_m$ to be interpolated, cf.~\eqref{eq:Falpha}. (b) We recursively obtain $F_{m+1}$ from $F_m$ by attaching the tensor core $A$ defined in~\eqref{eq:core}. (c) The full MPO of the DFT with translational-invariant cores. The black tensor represents an all-one vector. The right and left `boundary' conditions follow from Eqs.~\eqref{eq:AR} and \eqref{eq:AL}, respectively.}
    \label{fig:QFT_recur}
\end{figure}

Recall our definition (\ref{eq:xgtr_yless}) for $y_{\leq m}$ in
terms of the bit substring $\tau_{1:m}$, which we reproduce here
as
\[
y_{\leq m}=\sum_{l=1}^{m}2^{l-m-1}\tau_{l},
\]
 for every $m=1,\ldots,n$. Note that $y_{\leq k}\in[0,1]$ for all
$k$, and moreover we can construct the sequence $y_{\leq m}$ recursively
via the relation
\[
y_{\leq m+1}=\frac{1}{2}y_{\leq m}+\frac{1}{2}\tau_{m+1},
\]
 which holds for $m=1,\ldots,n-1$. Note that $y_{\leq1}=\frac{1}{2}\tau_{1}$. 

We will construct inductively a sequence of tensors $F_{m}\in\mathbb{C}^{2^{m}\times2^{m}\times(K+1)}$,
$m=1,\ldots,n$, with entries specified by 
\begin{equation}
\label{eq:Falpha}
F_{m}^{\alpha}(\sigma_{1:m},\tau_{1:m})=e^{-\pi i\sum_{k,l=1}^{m}2^{-k}2^{l}\sigma_{k}\tau_{l}}e^{-2\pi iy_{\leq m}c^{\alpha}},\quad\sigma_{1:m},\tau_{1:m}\in\{0,1\}^{m},\ \alpha\in[K],
\end{equation}
where we point out that $y_{\leq m}$ is always understood to be
implicitly defined in terms of $\tau_{1:m}$.

For the base case, the tensor $F_{1}\in\mathbb{C}^{2\times2\times(K+1)}$
can be written explicitly
\[
F_{1}^{\beta}(\sigma,\tau)=e^{-\pi i(\sigma+c^{\beta})\tau},\quad\sigma,\tau\in\{0,1\},\ \beta\in[K].
\]
 Alternatively, we may view $F_{1}^{\beta}$ for each $\beta$ as
the $2\times2$ matrix 
\[
F_{1}^{\beta}=\left(\begin{array}{cc}
1 & e^{-\pi\iota c^{\beta}}\\
1 & -e^{-\pi\iota c^{\beta}}
\end{array}\right).
\]
 Moreover, we can obtain our target $F(\sigma_{1:n},\tau_{1:n})$
as 
\[
F(\sigma_{1:n},\tau_{1:n})=F_{n}^{0}(\sigma_{1:n},\tau_{1:n}),
\]
 under the convention that $\alpha=0$ is the index of the leftmost
node $c^{\alpha}=0$.

Then we claim that $F_{m+1}$ can be obtained approximately from $F_{m}$
by attaching a single tensor core on the right. To see
this, expand: 
\begin{align*}
F_{m+1}^{\beta}(\sigma_{1:m+1},\tau_{1:m+1}) & =e^{-\pi i\sum_{k,l=1}^{m+1}2^{-k}2^{l}\sigma_{k}\tau_{l}}e^{-2\pi iy_{\leq m+1}c^{\beta}}\\
 & =e^{-\pi i\sum_{k,l=1}^{m}2^{-k}2^{l}\sigma_{k}\tau_{l}}\times e^{-\pi i\sum_{k=1}^{m}2^{-k}2^{m+1}\sigma_{k}\tau_{m+1}}\\
 & \quad\quad\quad\times\ e^{-\pi i\sigma_{m+1}\sum_{l=1}^{m}2^{l-m-1}\tau_{l}}\times e^{-\pi i\sigma_{m+1}\tau_{m+1}}\times e^{-2\pi iy_{\leq m+1}c^{\beta}}.
\end{align*}
 Now for any $k=1,\ldots,m$, observe that $2^{-k}2^{m+1}\sigma_{k}\in2\mathbb{Z}$,
so the second factor in the last expression is just $1$. Moreover,
we can substitute $\sum_{l=1}^{m}2^{l-m-1}\tau_{l}=y_{\leq m}$ in
the third factor and $y_{\leq m+1}=\frac{1}{2}y_{\leq m}+\frac{1}{2}\tau_{m+1}$
in the last factor to obtain 
\begin{align*}
F_{m+1}^{\beta}(\sigma_{1:m+1},\tau_{1:m+1}) & =e^{-\pi i\sum_{k,l=1}^{m}2^{-k}2^{l}\sigma_{k}\tau_{l}}e^{-2\pi iy_{\leq m}\left(\frac{\sigma_{m+1}+c^{\beta}}{2}\right)}e^{-\pi i\tau_{m+1}(\sigma_{m+1}+c^{\beta})}\\
 & =e^{-\pi i\sum_{k,l=1}^{m}2^{-k}2^{l}\sigma_{k}\tau_{l}}f_{y_{\leq m}}\left(\frac{\sigma_{m+1}+c^{\beta}}{2}\right)e^{-\pi i\tau_{m+1}(\sigma_{m+1}+c^{\beta})},
\end{align*}
 where we retain our definition of $f_{y}:[0,1]\ra\R$ from the last
section.

Then we can insert an interpolative decomposition 
\begin{equation}
\label{eq:interperror}
f_{y_{\leq m}}\left(\frac{\sigma_{m+1}+c^{\beta}}{2}\right)\approx\sum_{\alpha=0}^{K}f_{y_{\leq m}}(c^{\alpha})P^{\alpha}\left(\frac{\sigma_{m+1}+c^{\beta}}{2}\right)=\sum_{\alpha=0}^{K}e^{-2\pi iy_{\leq m}c^{\alpha}}P^{\alpha}\left(\frac{\sigma_{m+1}+c^{\beta}}{2}\right)
\end{equation}
into our expression for $F_{m+1}^{\beta}(\sigma_{1:m+1},\tau_{1:m+1})$,
yielding 
\begin{align*}
F_{m+1}^{\beta}(\sigma_{1:m+1},\tau_{1:m+1}) & \approx\sum_{\alpha=0}^{K}e^{-\pi i\sum_{k,l=1}^{m}2^{-k}2^{l}\sigma_{k}\tau_{l}}e^{-2\pi iy_{\leq m}c^{\alpha}}P^{\alpha}\left(\frac{\sigma_{m+1}+c^{\beta}}{2}\right)e^{-\pi i\tau_{m+1}(\sigma_{m+1}+c^{\beta})}\\
 & =\sum_{\alpha=0}^{K}F_{m}^{\alpha}(\sigma_{1:m},\tau_{1:m})A^{\alpha\beta}(\sigma_{m+1},\tau_{m+1}),
\end{align*}
 where we have defined the tensor core $A\in\mathbb{C}^{2\times2\times(K+1)\times(K+1)}$
via 
\begin{equation}
A^{\alpha\beta}(\sigma,\tau)=P^{\alpha}\left(\frac{\sigma+c^{\beta}}{2}\right)e^{-\pi i(\sigma+c^{\beta})\tau},\quad\sigma,\tau\in\{0,1\},\ \alpha,\beta\in[K].\label{eq:core}
\end{equation}
Alternatively we can write 
\[
A^{\alpha\beta}(\sigma,\tau)=P^{\alpha}\left(\frac{\sigma+c^{\beta}}{2}\right)F_{1}^{\beta}(\sigma,\tau).
\]

Then we have argued that we can approximate $F$ as an MPO with leftmost
core $A_{\mathrm{L}}=F_{1}$, internal cores $A$, and rightmost core
$A_{\mathrm{R}}\in\R^{2\times2\times(K+1)}$ given by 
\begin{equation}
\label{eq:AR}
A_{\mathrm{R}}^{\alpha}(\sigma,\tau)=A^{\alpha0}(\sigma,\tau).
\end{equation}
Since $\sum_{\alpha = 0}^{K} P^{\alpha}(x) = 1$ for all $x \in [0, 1]$, we can neatly express the leftmost core in terms of $A$ as 
\begin{equation}
\label{eq:AL}
A_L = \sum_{\alpha = 0}^{K} A^{\alpha \beta}. 
\end{equation}

\begin{rem}
It is straightforward to extend the scheme to qudits. Redefining $\sigma_k$ and $\tau_k$ to be elements of $\{0, 1, ..., d-1\}$, we consider the $d$-ary expansions  
\[
s=\sum_{k=1}^{n}d^{n-k}\sigma_{k},\quad t=\sum_{k=1}^{n}d^{k-1}\tau_{k}.
\]
in place of~\eqref{eq:bindec}.

In this setting the DFT matrix can be written as
\[
\mathbf{F}_{s,t}=F(\sigma_{1:n},\tau_{1:n})=e^{-\frac{2}{d}\pi i\sum_{k,l=1}^{n}d^{-k}d^{l}\sigma_{k}\tau_{l}}.
\]
Following the same construction, we obtain the core
\[
A^{\alpha\beta}(\sigma,\tau)=P^{\alpha}\left(\frac{\sigma+c^{\beta}}{d}\right)e^{-\frac{2}{d}\pi i(\sigma+c^{\beta})\tau},\quad\sigma,\tau\in\{0,1,...,d-1  \},\ \alpha,\beta\in[K].
\]
\end{rem}

\section{Error bound}
\label{sec:error}

The following result bounds the entrywise error of the MPO constructed
as above.
\begin{thm}
\label{thm:mpo}
For fixed integer $K \geq 1$, let $F_{\mathrm{MPO}}$ denote the $n$-site, rank-$(K+1)$ MPO comprised of 
internal cores
\[
A^{\alpha\beta}(\sigma,\tau)=P^{\alpha}\left(\frac{\sigma+c^{\beta}}{2}\right)e^{-\pi i(\sigma+c^{\beta})\tau},\quad\sigma,\tau\in\{0,1\},\ \alpha,\beta\in[K],
\]
leftmost core $A_{\mathrm{L}}^\beta = \sum_{\alpha=0}^K A^{\alpha\beta} = e^{-\pi i(\sigma+c^{\beta})\tau}$ and rightmost core $A_{\mathrm{R}}^\alpha = A^{\alpha0} = P^{\alpha}(\sigma/2) e^{-\pi i\sigma\tau}$, as described in the
preceding discussion and illustrated in Figure~\ref{fig:QFT_recur}(c).

Then the entrywise error of this MPO with respect
to the true DFT operator $F$ is bounded as 
\[
\Vert F_{\mathrm{MPO}} - F \Vert_\infty \leq \frac{\Lambda_{K}^{n-1}-1}{\Lambda_{K}-1} E_{K} \leq (n-1)\Lambda_{K}^{n-2}E_{K},
\]
 where $\Lambda_{K}$ is the Lebesgue constant of $(K+1)$-point Chebyshev-Lobatto
interpolation and $E_{K}$ is the worst case $(K+1)$-point Chebyshev-Lobatto
interpolation error over the function class $f_{y}:[0,1]\ra\R$ defined
by $f_{y}(x)=e^{-2\pi ixy}$, $y\in[0,1]$.

The quantities $\Lambda_K$ and $E_K$ are in turn bounded as 
\[
\Lambda_{K}\leq1+\frac{2}{\pi}\log(K+1), \quad E_{K}\leq\frac{4\left(\frac{\pi}{2}\right)^{K+1}e^{K}K^{-K}}{K-\frac{\pi}{2}}.
\]
\end{thm}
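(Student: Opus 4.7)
The plan is to track the accumulated entrywise error as the MPO is built one core at a time from the left, exploiting the fact that at each interpolation step the newly introduced error is at most $E_K$, while any previously accumulated error is amplified by at most the Lebesgue constant $\Lambda_K$ when passed through the next core. The residual introduced by the Chebyshev--Lobatto interpolation in~\eqref{eq:interperror} is precisely the object bounded by $E_K$, and the Lagrange functions $P^{\alpha}$ are exactly what couples the bond index to the amplification constant $\Lambda_K$.

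Concretely, I would define the tensors $\tilde{F}_{m}$ via the recursion $\tilde{F}_{1} = F_{1}$ and $\tilde{F}_{m+1}^{\beta} = \sum_{\alpha} \tilde{F}_{m}^{\alpha} A^{\alpha\beta}$, so that the MPO in the theorem statement is exactly $F_{\mathrm{MPO}} = \tilde{F}_{n}^{0}$ (because $A_{\mathrm{R}}^{\alpha} = A^{\alpha 0}$), while $F = F_{n}^{0}$ by the defining equation in Section~\ref{sec:construction}. Let $\epsilon_{m} := \Vert \tilde{F}_{m} - F_{m} \Vert_{\infty}$. The derivation preceding~\eqref{eq:interperror} shows
\[
F_{m+1}^{\beta} = \sum_{\alpha} F_{m}^{\alpha} A^{\alpha\beta} + R_{m}^{\beta},
\]
where $R_{m}^{\beta}$ is the Chebyshev--Lobatto interpolation residual of $f_{y_{\leq m}}$ at the point $\tfrac{\sigma_{m+1}+c^{\beta}}{2} \in [0,1]$, multiplied by factors of unit modulus; hence $\Vert R_{m} \Vert_{\infty} \leq E_{K}$ by definition of $E_{K}$. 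Subtracting the analogous relation for $\tilde{F}_{m+1}^{\beta}$ and taking entrywise absolute values gives
\[
\epsilon_{m+1} \leq \max_{\sigma,\tau,\beta} \sum_{\alpha} \bigl|A^{\alpha\beta}(\sigma,\tau)\bigr|\, \epsilon_{m} + E_{K} = \Lambda_{K}\, \epsilon_{m} + E_{K},
\]
where the final equality uses $|A^{\alpha\beta}(\sigma,\tau)| = |P^{\alpha}(\tfrac{\sigma+c^{\beta}}{2})|$ together with $\tfrac{\sigma+c^{\beta}}{2} \in [0,1]$ and the definition of the Lebesgue constant.

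Unrolling this linear recursion with $\epsilon_{1} = 0$ yields $\epsilon_{n} \leq E_{K}\sum_{j=0}^{n-2} \Lambda_{K}^{j} = E_{K}\,\frac{\Lambda_{K}^{n-1}-1}{\Lambda_{K}-1}$, which is in turn bounded by $(n-1)\Lambda_{K}^{n-2} E_{K}$ since $\Lambda_{K} \geq 1$. For the two subsidiary estimates, the bound on $E_{K}$ is exactly Proposition~\ref{prop:unfolding_mat_error}, and the bound $\Lambda_{K} \leq 1 + \tfrac{2}{\pi}\log(K+1)$ is the classical Ehlich--Zeller estimate for the Chebyshev--Lobatto Lebesgue constant, which I would cite from~\cite{TrefethenBook2019}. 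The only nontrivial bookkeeping step is verifying that the residual and the amplification factor pick up nothing beyond $E_{K}$ and $\Lambda_{K}$ respectively; this follows because every factor attached by a core, beyond the Lagrange weight $P^{\alpha}(\tfrac{\sigma+c^{\beta}}{2})$, is a complex exponential of unit modulus, so the $L^{\infty}$ norms are preserved and the propagation constant is precisely $\Lambda_{K}$.
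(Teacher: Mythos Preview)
Your proposal is correct and follows essentially the same approach as the paper's proof: both define the partial contractions $\tilde{F}_m$, set up the recursion $\epsilon_{m+1}\leq \Lambda_K \epsilon_m + E_K$ by splitting into the propagated error (controlled via the Lebesgue constant since $|A^{\alpha\beta}(\sigma,\tau)|=|P^\alpha((\sigma+c^\beta)/2)|$) and the fresh interpolation error $E_K$, and then unroll the recursion. The only cosmetic slip is that your displayed ``$=\Lambda_K\,\epsilon_m+E_K$'' should be ``$\leq$'', since $\Lambda_K$ is a supremum over all of $[0,1]$ rather than over the finitely many points $(\sigma+c^\beta)/2$, but this does not affect the argument.
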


\begin{proof}
Let $\tilde{F}_{m+1}=F_{1}A^{m}\in\R^{2^{m+1}\times2^{m+1}\times(K+1)}$
denote the intermediate MPO (with a hanging bond at the right) furnished
by procedure outlined above, meant to approximate $F_{m+1}$. We will use the notation $F_m A$ and $\tilde{F}_m A$ to denote the tensors obtained by attaching the tensor core $A$ to $F_m$ and $\tilde{F}_m$, respectively, as illustrated in Figure~\ref{fig:QFT_recur}.

We will inductively bound the error $\ve_{m}:=\Vert\tilde{F}_{m}-F_{m}\Vert_{\infty}$.
Note that in the base case we have $\ve_{1}=0$. Moreover, since $F = {F}_n^{0}$ and $F_{\mathrm{MPO}} = \tilde{F}_n^{0}$, we have that $\Vert F_{\mathrm{MPO}} - F \Vert_\infty \leq \ve_n$, and it will suffice to get a bound on $\ve_n$.

First compute  
\begin{align*}
\ve_{m+1} & =\Vert\tilde{F}_{m+1}-F_{m+1}\Vert_{\infty}\\
 & =\Vert\tilde{F}_{m}A-F_{m+1}\Vert_{\infty}\\
 & \leq\Vert\tilde{F}_{m}A-F_{m}A\Vert_{\infty}+\Vert F_{m}A-F_{m+1}\Vert_{\infty}\\
 & =\Vert(\tilde{F}_{m}-F_{m})A\Vert_{\infty}+\Vert F_{m}A-F_{m+1}\Vert_{\infty}.
\end{align*}
Observe that the second term in the last expression is bounded is the entrywise error of the interpolation~\eqref{eq:interperror}, which is bounded by $E_K$. In the proof of Proposition~\ref{prop:unfolding_mat_error} above,
we have already shown the bound
\[
E_{K} \leq \frac{4\left(\frac{\pi}{2}\right)^{K+1}e^{K}K^{-K}}{K-\frac{\pi}{2}}
\]
appearing in the statement of Theorem~\ref{thm:mpo}.

Now define $E_{m}:=\tilde{F}_{m}-F_{m}$, so we have that $E_{m}$
is entrywise bounded by $\ve_{m}$, and we wish to uniformly bound
the entries $[E_{m}A]^{\beta}(\sigma_{1:m},\tau_{1:m})$ of $E_{m}A$.
To this end, compute: 
\begin{align*}
\left|[E_{m}A]^{\beta}(\sigma_{1:m},\tau_{1:m})\right| & =\left|\sum_{\alpha=0}^{K}E_{m}^{\alpha}(\sigma_{1:m},\tau_{1:m})A^{\alpha\beta}(\sigma_{m+1},\tau_{m+1})\right|\\
 & \leq\ve_{m}\sum_{\alpha=0}^{K}\left|A^{\alpha\beta}(\sigma_{m+1},\tau_{m+1})\right|\\
 & =\ve_{m}\sum_{\alpha=0}^{K}\left|P^{\alpha}\left(\frac{\sigma+c^{\beta}}{2}\right)\right|\\
 & \leq\ve_{m}\Lambda_{K},
\end{align*}
 where $\Lambda_{K}$ is the \emph{Lebesgue constant }of our interpolation
scheme, cf.~\cite{TrefethenBook2019}.
It is known~\cite{TrefethenBook2019} that 
\[
\Lambda_{K}\leq1+\frac{2}{\pi}\log(K+1).
\]

Then putting our bounds together we have derived 
\[
\ve_{m+1}\leq\Lambda_{K}\ve_{m}+E_{K}.
\]
 It follows that 
\[
\ve_{n}\leq\left[\sum_{p=0}^{n-2}\Lambda_{K}^{p}\right]E_{K}=\frac{\Lambda_{K}^{n-1}-1}{\Lambda_{K}-1}E_{K}.
\]
Observe that a crude but simpler bound is given by 
\[\ve_{n}\leq(n-1)\Lambda_{K}^{n-2}E_{K}.\]
\end{proof}

\section{Connection to the approximate QFT}
\label{sec:aqft}

The AQFT \cite{Coppersmith_2002} on $n$ qubits with approximation level $b \in \{ 0,\ldots , n-1 \}$ is defined by 
\begin{equation}
\label{eq:AQFT}
    {F}_{n,b}(\sigma_{1:n}, \tau_{1:n})  = e^{-\pi i \sum_{k=1}^n \sum_{l=\max(1, k-b)}^{n} 2^{-k} 2^l \sigma_k \tau_{l}},
\end{equation}
up to the choice of convention for normalization and complex conjugation. Relative to the formula~\eqref{eq:DFT} for the QFT/DFT, in the AQFT we discard all terms in the exponent with $k-l > b$. If $b = n-1$ we recover the exact QFT, and if $b = 0$ we get the Hadamard transform. The quantum circuits for the QFT and AQFT are shown in Fig.~\ref{fig:AQFT}. We prove the following theorem regarding the connection to interpolative decomposition:

\begin{figure}
    \centering   
    \includegraphics[width=1\textwidth]{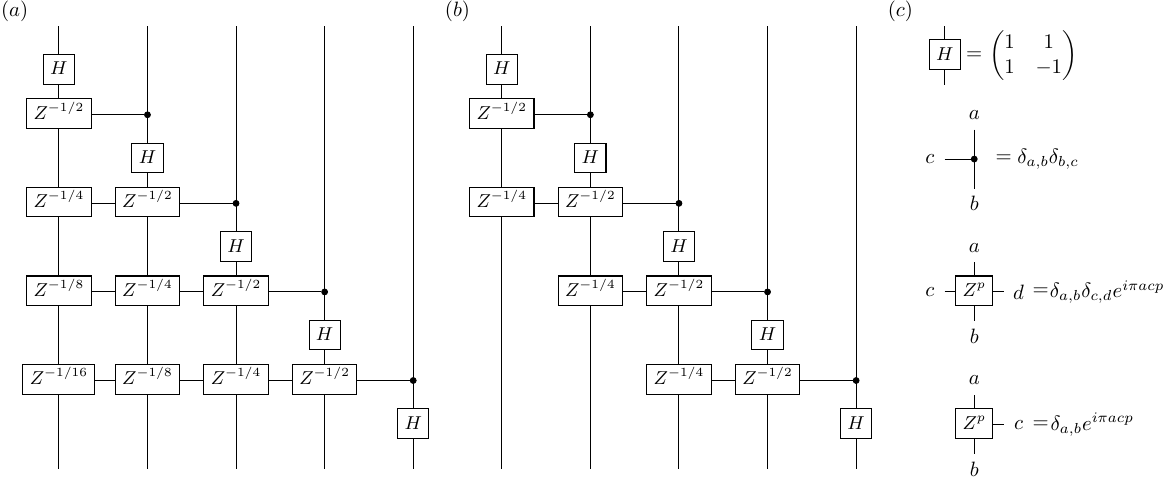}
    \caption{(a) A quantum circuit representing the 5-qubit QFT. (b) A quantum circuit representing the 5-qubit AQFT with approximation level $b=2$. (c) Tensors to interpret the circuits as tensor networks.}
    \label{fig:AQFT}
\end{figure}

\begin{thm}
The $n$-qubit AQFT with approximation level $b$ can be expressed exactly as a rank-$(2^b)$ MPO in which all of the internal cores are given by 
\begin{equation}
    \label{eq:A_AQFT}
    A^{\alpha\beta}(\sigma,\tau)=\chi^{\alpha}\left(\frac{\sigma+u^{\beta}}{2}\right)e^{-\pi i(\sigma+u^{\beta})\tau},\quad \sigma, \tau \in \{0,1\}, \ \alpha , \beta \in [2^b - 1],
\end{equation}
where $u^\beta = \beta/2^b$ for $\beta = 0,...,2^b-1$, and $\chi^\alpha(x)$ are the indicator functions
\begin{equation*}
\chi^\alpha(x) = 
\begin{cases}
1, &  x \in [u^\alpha , u^{\alpha + 1} ), \\
0, &  \mathrm{otherwise}.
\end{cases}
\end{equation*}
The leftmost core is given by $\sum_{\alpha=0}^{2^b - 1} A^{\alpha\beta}(\sigma,\tau) = e^{-\pi i(\sigma+u^{\beta})\tau}$, and the rightmost core is given by $A^{\alpha 0}(\sigma,\tau) = \chi^{\alpha}(\sigma/2) e^{-\pi i\sigma\tau}$. 
\end{thm}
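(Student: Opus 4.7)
The plan is to contract the MPO built from the cores in the statement directly and verify it matches the AQFT formula~\eqref{eq:AQFT}. Unlike the Chebyshev setting of Theorem~\ref{thm:mpo}, here the piecewise-constant indicator interpolation collapses the sum over bond indices to a single summand, so the remaining work is pure combinatorial bookkeeping. The key elementary observation is that, for any $\sigma \in \{0, 1\}$ and any admissible $\beta$, the argument $(\sigma + u^\beta)/2$ lies in $[0, 1)$, and $\chi^\alpha((\sigma + u^\beta)/2) = 1$ for exactly one $\alpha$, namely the $\alpha$ for which $u^\alpha$ equals the top $b$ binary bits of $(\sigma + u^\beta)/2$.

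With this observation I would contract the MPO from right to left. The rightmost core pins $u^{\alpha_{n-1}} = \sigma_n/2$, and each internal core then forces $u^{\alpha_k}$ to be the top-$b$-bit truncation of $(\sigma_{k+1} + u^{\alpha_{k+1}})/2$. Unwinding the recursion yields the closed form
\[
u^{\alpha_k} = \sum_{j=1}^{b} 2^{-j}\sigma_{k+j},
\]
using the convention $\sigma_j = 0$ for $j > n$. At these selected bond indices all indicator factors equal $1$, so the MPO contraction reduces to the product of phase factors $e^{-\pi i (\sigma_k + u^{\alpha_k}) \tau_k}$ for $k = 1, \ldots, n-1$ together with $e^{-\pi i \sigma_n \tau_n}$ from the rightmost core.

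The remaining step is algebraic: substitute the closed form for $u^{\alpha_k}$, expand the exponent, and reindex the resulting double sum by renaming $k+j$ as a new outer index. This rewrites the total phase as $-\pi i \sum_{k=1}^n \sum_{l = \max(1, k-b)}^{k} 2^{l-k} \sigma_k \tau_l$. To match~\eqref{eq:AQFT}, which sums $l$ over the wider range $[\max(1, k-b), n]$, I would invoke the same even-integer periodicity used in Section~\ref{sec:construction}: for $l > k$ the factor $2^{l-k}$ is an even integer, so $e^{-\pi i \cdot 2^{l-k} \sigma_k \tau_l} = 1$, making the extra AQFT summands trivial.

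The main obstacle is purely bookkeeping --- verifying that the bond-index recursion unwinds to the stated closed form and carrying out the double-sum reindexing cleanly. Conceptually the proof is transparent: the $b$-bit shift register implemented by the indicator cores exactly mirrors the AQFT truncation rule $k - l \leq b$, while the $l > k$ terms drop out by the periodicity of the complex exponential.
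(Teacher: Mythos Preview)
Your proposal is correct and takes a genuinely different route from the paper. The paper proceeds \emph{left to right}: it introduces an auxiliary tensor $Q_{m,b}^{\beta}(\sigma_{1:m},\tau_{1:m}) = F_{m,b}(\sigma_{1:m},\tau_{1:m})\,B_{m,b}^{\beta}(\tau_{1:m})$, where $B_{m,b}^{\beta}$ is an explicit phase depending on the hanging right bond index $\beta$, and then verifies the exact recursion $Q_{m,b}\,A = Q_{m+1,b}$, so that $Q_{1,b}$ is the left core and $Q_{n,b}^{0}=F_{n,b}$. You instead contract \emph{right to left} and exploit the special feature of piecewise-constant interpolation that each indicator $\chi^{\alpha}$ picks out a unique $\alpha$, collapsing every bond sum to a single term; this yields the closed form $u^{\alpha_k}=\sum_{j=1}^{b}2^{-j}\sigma_{k+j}$ and reduces the claim to a phase identity. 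Your argument is more elementary and self-contained --- no auxiliary tensors, just one recursion and one reindexing --- whereas the paper's argument is slightly longer but deliberately mirrors the Chebyshev construction of Section~\ref{sec:construction}, reinforcing the point that the AQFT and the QFT MPO arise from the \emph{same} interpolative template with different interpolation nodes. Either approach proves the theorem cleanly; the bookkeeping you flag (unwinding the bond recursion and swapping the order of summation) is routine and poses no real obstacle.
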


\begin{proof}
    We adopt the notation $[0.\tau_p...\tau_1] = \sum_{j=1}^p \tau_{p+1-j} 2^{-j}$ for arbitrary bit strings $\tau_1 , \ldots, \tau_p$ of arbitrary length. Intuitively the notation indicates a binary decimal expansion. With this notation, for any $m$ we can express the $m$-qubit AQFT with approximation level $b$  as
    \[
        {F}_{m,b} (\sigma_{1:m}, \tau_{1:m}) = e^{-2\pi i \sum_{k=1}^m \sigma_k [0.\tau_{k}...\tau_{k-b}]}.
    \]
    Indeed, for each $k$, note that in~\eqref{eq:AQFT} all the terms with $l > k$ contribute only integer multiples of $2\pi i$ in the exponent, which all vanish.
    
    Now for any $\alpha, \beta \in [2^b -1 ]$, we can uniquely write $\alpha = \sum_{j=1}^{b} \alpha_j 2^{b-j}$ and $\beta = \sum_{j=1}^{b} \beta_j 2^{b-j}$ for bit strings $\alpha_1, \ldots, \alpha_b$ and $\beta_1, \ldots, \beta_b$. We will identify $\alpha$ and $\beta$ with these respective bit strings in the notation.
    
    Then define
    \[
        B_{m,b}^\beta (\tau_{1:m}) := e^{- 2\pi i \sum_{j=1}^b \beta_j 2^{-j} [0.\tau_{m}...\tau_{m-b+j}]},
    \]
    and
    \begin{align}
        {Q}_{m,b}^\beta (\sigma_{1:m}, \tau_{1:m}) \ := \  \ &  {F}_{m,b} (\sigma_{1:m}, \tau_{1:m}) B_{m,b}^\beta (\tau_{1:m}) \label{eq:Qdef} \\
        \  = \  \  &  e^{-2\pi i \sum_{j=1}^m \sigma_j [0.\tau_{j}...\tau_{j-b}]} e^{- 2\pi i \sum_{j=1}^b \beta_j 2^{-j} [0.\tau_{m}...\tau_{m-b+j}]}. \nonumber
    \end{align}
    Note that $B_{m,b}^0 \equiv 1$, and thus by setting $\beta = 0$ we recover the AQFT, i.e. ${Q}_{m,b}^0 = {F}_{m,b}$.
    
    Our goal is to prove that \begin{equation}
        \sum_{\alpha = 0}^{2^b-1} {Q}_{m,b}^\alpha (\sigma_{1:m}, \tau_{1:m}) A^{\alpha\beta} (\sigma_{m+1}, \tau_{m+1}) = {Q}_{m+1,b}^\beta (\sigma_{1:m+1}, \tau_{1:m+1}), 
        \label{eq:AQFT-MPO}
    \end{equation}
    or ${Q}_{m,b} \, A  = {Q}_{m+1,b}$ in the shorthand of the preceding sections, in which the left-hand side indicates the result of attaching the core $A$ to the right of ${Q}_{m,b}$.
Direct computation reveals that ${Q}_{1,b}^\beta (\sigma, \tau) = e^{-\pi i (\sigma + \beta/2^b )\tau }$, which precisely matches the leftmost core from the statement of the theorem. Thus the claim~\eqref{eq:AQFT-MPO} will complete the proof of the theorem. We now turn to the proof of this claim.
    
    First, one can verify that for $x = \sum_{j=1}^n x_j 2^{-j}$ where $x_1,\ldots, x_n \in \{0,1\}$ and $n \geq b$, we have that 
    \begin{equation*}
        \chi^\alpha(x) = \prod_{k=1}^{b}\delta_{\alpha_{k}, x_{k}}.
    \end{equation*}
    Then since $(\sigma + u^\beta)/2 = \sigma/2 + \beta_{1}/4 + \cdots + \beta_{b}/2^{b+1}$, 
    \begin{equation*}
    \begin{split}
        &  \sum_{\alpha=0}^{2^b-1} B^\alpha_{m,b}(\tau_{1:m}) \ \chi^{\alpha}\left(\frac{\sigma_{m+1}+u^{\beta}}{2}\right) \\ 
        & \quad = \ \ \sum_{\alpha=0}^{2^b-1} e^{-2\pi i \sum_{j=1}^b \alpha_j 2^{-j} [0.\tau_{m}...\tau_{m-b+j}]} \ \chi^{\alpha}\left(\frac{\sigma_{m+1}+u^{\beta}}{2}\right)\\
        & \quad = \ \ \sum_{\alpha=0}^{2^b-1} e^{-2\pi i \sum_{j=1}^b \alpha_j 2^{-j} [0.\tau_{m}...\tau_{m-b+j}]} \ \delta_{\alpha_1, \sigma_{m+1}} \, \prod_{k=1}^{b-1}\delta_{\alpha_{k+1}, \beta_{k}}\\
        & \quad = \ \  e^{-2\pi i \sigma_{m+1} [0.0\tau_{m}...\tau_{m-b+1}]} \ e^{-2\pi i \sum_{j=2}^b \beta_{j-1} 2^{-j} [0.\tau_{m}...\tau_{m-b+j}]}
    \end{split}
    \end{equation*}
    The first factor in the last expression can be grouped with ${F}_{m,b} (\sigma_{1:m}, \tau_{1:m} ) $ and $e^{-\pi i \sigma_{m+1} \tau_{m+1}}$ to form ${F}_{m+1, b} (\sigma_{1:m+1},\tau_{1:m+1})$:
    \begin{equation*}
    \begin{split}
        &{F}_{m,b}(\sigma_{1:m},\tau_{1:m}) \, e^{-2\pi i \sigma_{m+1} [0.0\tau_{m}...\tau_{m-b+1}]} e^{-\pi i \sigma_{m+1} \tau_{m+1}} \\
        & \quad = \ \  {F}_{m,b} (\sigma_{1:m},\tau_{1:m}) \, e^{-2\pi i \sigma_{m+1} [0.\tau_{m+1}\tau_{m}...\tau_{m+1-b}]}\\
        & \quad = \ \  {F}_{m+1,b} (\sigma_{1:m+1},\tau_{1:m+1}),
    \end{split}   
    \end{equation*}
    and the second factor can be grouped with $e^{-\pi i u^\beta \tau_{m+1} }$ to form $B_{m+1, b}^\beta (\tau_{1:m+1})$:
    \begin{equation*}
    \begin{split}
        & e^{-2\pi i \sum_{j=2}^b \beta_{j-1} 2^{-j} [0.\tau_{m}...\tau_{m-b+j}]} \ e^{-\pi i u^\beta \tau_{m+1}} \\
        & \quad = \ \  e^{-2\pi i \sum_{j=1}^{b-1} \beta_{j} 2^{-j} [0.0\tau_{m}...\tau_{m+1-b+j}]} \ e^{-2\pi i \sum_{j=1}^{b} \beta_j 2^{-j} [0.\tau_{m+1}]} \\
        & \quad = \ \  e^{-2\pi i \sum_{j=1}^b \beta_{j} 2^{-j} [0.\tau_{m+1}\tau_{m}...\tau_{m+1-b+j}]} \\
        & \quad = \ \ B_{m+1, b}^\beta (\tau_{1:m+1}). 
    \end{split}
    \end{equation*}
Based on the definitions~\eqref{eq:A_AQFT} and~\eqref{eq:Qdef} of $A$ and $Q_{m,b}$, respectively, the claim~\eqref{eq:AQFT-MPO} follows.
\end{proof}

In fact we can derive the MPO core $A$ from the tensor diagram of the AQFT, by directly contracting local tensors in the AQFT circuit. This construction is shown in Fig.~\ref{fig:AQFT} in the case $b=2$.

\begin{figure}
    \centering   
    \includegraphics[width=0.7\textwidth]{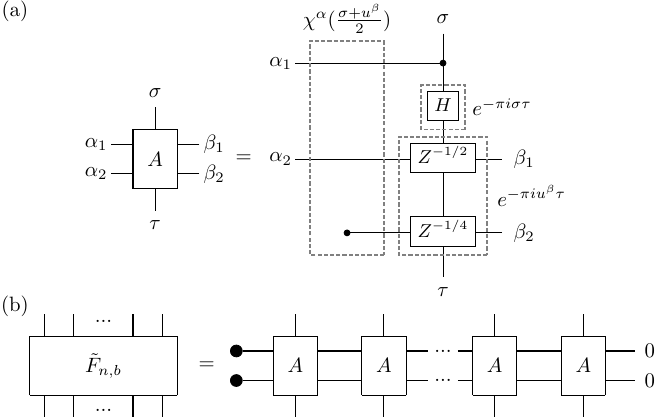}
    \caption{(a) The AQFT MPO core $A$ can be obtained by contracting local tensors in the AQFT circuit, as shown here for approximation level $b=2$. (b) Illustration of the AQFT MPO generated by $A$ for $b=2$.}
    \label{fig:AQFT}
\end{figure}

The entrywise error of this MPO relative to the true DFT/QFT can be bounded directly by the exact expression~\eqref{eq:AQFT} for entries, without relying on interpolation error bounds, as follows.

For any two phases $\theta_1, \theta_2 \in [0,2 \pi)$ separated by an angular distance $\Delta \theta \in [0, \pi]$, we always have 
\[
 |e^{-i\theta_1} - e^{-i\theta_2}| \leq \Delta \theta.
\]
Then for any fixed $\sigma_{1:n}, \tau_{1:n}$, the angular distance $\Delta \theta $ between the phases of the AQFT entry $F_{n,b} (\sigma_{1:n}, \tau_{1:n})$ and the corresponding QFT entry  $F (\sigma_{1:n}, \tau_{1:n} )$ is bounded as $\pi \sum_{k-l > b} 2^{-(k-l)}$. We can compute
\[
\sum_{k-l > b} 2^{-(k-l)} \leq \sum_{l=1}^n \sum_{k=l+b+1}^\infty 2^{-(k-l)} = n \sum_{j=b+1}^\infty 2^{-j} = n \, 2^{-b}.
\]
Therefore the entrywise error of the $n$-qubit AQFT with approximation level $b$ is bounded by $\pi n \,2^{-b}$.


We see that for a fixed number $n$ of qubits, the error of the AQFT decays as $O(1/K)$ where $K = 2^b$ is the bond dimension, and to maintain fixed error tolerance the bond dimension must scale linearly with $n$. On the other hand, in the Chebyshev-Lobatto construction of the QFT MPO,  the error decays super-exponentially with the bond dimension for a fixed number $n$ of qubits. Meanwhile, to maintain fixed error, the bond dimension need only scale sublinearly with $n$. Thus the piecewise constant interpolation used implicitly by the AQFT yields a less efficient MPO approximation for the QFT. On the other hand, the AQFT is presented quite naturally as a quantum circuit. Whether other interpolative constructions can yield more efficient quantum circuits that approximate the QFT is an interesting open question.

\printbibliography
\end{document}